\newif\ifshort  \newif\iflong
\title{On the Edge-length Ratio of Outerplanar graphs \thanks{Research supported in part by some very good wine, lots of pizza, and some pasta (especially carbonara).}}
\title{On the Edge-length Ratio of Outerplanar Graphs}
\author{Sylvain Lazard\inst{1} \and
William Lenhart\inst{2} \and
Giuseppe Liotta\inst{3}
}
\institute{Inria, CNRS, U. Lorraine, France,~\email{sylvain.lazard@inria.fr}\and
Williams College, US,~\email{wlenhart@williams.edu}\and
U. of Perugia, Italy,~\email{giuseppe.liotta@unipg.it}}
\begin{document}

\maketitle

\begin{abstract}

We show that any outerplanar graph admits a planar straight-line drawing such that  the length ratio
of the longest to the shortest edges is strictly less than $2$.
This result is tight in the sense that for any $\epsilon > 0$ there are outerplanar graphs
that cannot be drawn with an edge-length ratio smaller than $2 - \epsilon$.
We also show that every bipartite outerplanar graph
has a planar straight-line drawing with edge-length ratio $1$, and that, for any $k \geq 1$, there
exists an outerplanar graph with a given combinatorial embedding such that  any
planar straight-line drawing has edge-length ratio greater than~$k$.

\end{abstract}

\section{Introduction}\label{se:intro}

The problem of computing a planar straight-line drawing with prescribed edge lengths has been addressed by several authors, partly for its theoretical interest and partly  for its application in different areas, including VLSI, wireless sensor networks, and computational geometry (see for example~\cite{DBLP:journals/siamdm/BattistaV96,DBLP:conf/infocom/DohertyPG01,DBLP:series/natosec/HeldKRV11,DBLP:conf/usenix/SavareseRL02}). Deciding
whether a graph admits a straight-line planar drawing with prescribed edge lengths was shown to be
NP-hard by Eades and Wormald for 3-connected planar graphs ~\cite{DBLP:journals/dam/EadesW90}.
In the same paper, the authors show that it is NP-hard to determine whether a 2-connected planar graph has a {\em unit-length} planar straight-line drawing; that is, a drawing in which all edges have the same length.  Cabello et al. extend this last result by showing that it is NP-hard to decide whether a 3-connected planar graph admits a unit-length planar straight-line drawing \cite{DBLP:journals/jgaa/CabelloDR07}. In addition, Bhatt and Cosmadakis prove that deciding whether a degree-4 tree has a planar drawing such that all edges have the same length and the vertices are at integer grid points is also NP-hard~\cite{DBLP:journals/ipl/BhattC87}.

These hardness results have motivated the study of relaxations and variants of the problem of computing straight-line planar drawings with constraints on the edge lengths.
For example, Aichholzer et al.~\cite{DBLP:conf/cccg/AichholzerHKR14} study the problem of computing straight-line planar drawings where, for each pair of edges of the input graph $G$, it is specified which edge must be longer. They characterize families of graphs that are
{\em length universal}, i.e. they admit a planar straight-line drawing for any given total order of their edge lengths.

Perhaps one of the most natural variants of the problem in the context of graph drawing is that where, instead of imposing constraints on the edge lengths, one aims at computing planar straight-line drawings where the variance of the lengths of the edges is minimized. See for example ~\cite{DBLP:books/ph/BattistaETT99}, where this optimization goal is listed among the most relevant aesthetics that impact the readability of a drawing of a graph.
Computing straight-line drawings where the ratio of the longest to the shortest edge is close to $1$ also arises in the approximation of unit disk graph representations, a problem of interest in the area of wireless communication networks
(see, e.g. ~\cite{Chen2011,DBLP:conf/dialm/KuhnMW04}).

Discouragingly, Eades and Wormald observe in their seminal paper that the NP-hardness of computing 2-connected planar straight-line drawings with unit edge lengths persists even when a small tolerance (independent of the problem size) in the length of the edges is allowed. To our knowledge, little progress has been made on bounding the ratio between the longest and shortest edge lengths in planar straight-line drawings.  We recall the work of Hoffmann et al.~\cite{DBLP:conf/cccg/HoffmannKKR14}, who compare different drawing styles according to different quality measures including the edge-length variance.

In this paper we study planar straight-line drawings of outerplanar graphs that bound
the ratio of the longest to the shortest edge lengths from above by a constant.
We define the {\em planar edge-length ratio} of a planar graph $G$ as the smallest ratio between the longest and the shortest edge lengths over all planar straight-line drawings of $G$. The main result of the paper is the following.

\begin{theorem}\label{th:main}
The planar edge-length ratio of an outerplanar graph is strictly less than $2$. Also, for any given real positive number $\epsilon$, there exists an outerplanar graph whose planar edge-length ratio is greater than $2 - \epsilon$.
\end{theorem}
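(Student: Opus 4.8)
The plan is to prove the two halves of the statement separately; the lower bound is a short extremal argument, while essentially all of the work goes into the upper bound.

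\textbf{Lower bound.} I would use the \emph{fans} $F_m$: the maximal outerplanar graph obtained from the path $v_1v_2\cdots v_m$ by adding a vertex $v_0$ adjacent to every $v_i$. Fix any planar straight-line drawing of $F_m$, rescale so that the shortest edge has length $1$, and let $R$ be the length of the longest edge; we may assume $R\le 2$, as otherwise there is nothing to prove. The one combinatorial input is the claim that \emph{some triangle $T_i=v_0v_iv_{i+1}$ has angle $\theta_i:=\angle v_iv_0v_{i+1}$ at most $\frac{2\pi}{m-1}$}. This is where planarity enters: the angular cones at $v_0$ spanned by $T_1,\dots,T_{m-1}$ form a laminar family (two of these triangles either have disjoint interiors, forcing disjoint cones, or are nested), and because consecutive $T_i$ are glued along the spanning path $v_1\cdots v_m$ this family cannot have all its widths bounded away from $0$. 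Granting the claim, apply the law of cosines to that thin triangle, with $a=|v_0v_i|$, $b=|v_0v_{i+1}|\in[1,R]$ and $|v_iv_{i+1}|\ge 1$:
\[
1\;\le\;a^2+b^2-2ab\cos\theta_i\;=\;(a-b)^2+2ab(1-\cos\theta_i)\;\le\;(R-1)^2+\frac{16\pi^2}{(m-1)^2},
\]
the last step using $R\le 2$ and $1-\cos\theta_i\le\frac12\theta_i^2$. Hence $R\ge 1+\sqrt{1-16\pi^2/(m-1)^2}$, which exceeds $2-\epsilon$ once $m$ is large enough. I expect only the laminar-cone claim to require care (a routine planarity bookkeeping on how a sub-path of $v_1\cdots v_m$ can be trapped inside a triangle); everything else is a one-line computation.

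\textbf{Upper bound.} First I would reduce to \emph{maximal} outerplanar graphs, i.e.\ internally triangulated polygons, since every outerplanar graph is a spanning subgraph of such a graph and restricting a planar straight-line drawing to a subgraph never increases the edge-length ratio. For a maximal outerplanar $G$, whose inner faces are triangles and whose weak dual is a tree $\mathcal T$, I would build the drawing incrementally: root $\mathcal T$, draw one triangle, and repeatedly attach a triangle $uvw$ along an already-drawn boundary edge $uv$ by placing the new apex $w$ on the outer side of $uv$. The natural invariants to carry are (i) that every drawn edge has length in a fixed window $[1,c]$ with $c<2$, and (ii) a geometric \emph{reservoir} invariant: above every current boundary edge $xy$ there is a crossing-free pocket, quantified by lower bounds on the two wedge angles at $x$ and at $y$ toward the adjacent boundary edges, inside which the entire subtree of $\mathcal T$ still hanging from $xy$ can be accommodated. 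Given~(ii), one attaches $w$ on the perpendicular bisector of $uv$ at the distance making $|uw|=|vw|$ a chosen value in $[1,c]$; since $|uv|<2$ this triangle is nondegenerate, and it fits by the reservoir invariant, which is then updated at $uw$ and $vw$.

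\textbf{The main obstacle.} The crux is designing the reservoir invariant so that it is simultaneously strong enough to guarantee room for the next attached triangle with both new edges in $[1,c]$, and self-perpetuating under attachment --- not only along the two new boundary edges $uw$ and $vw$, but also along the boundary edges that remain incident to $u$ and to $v$, whose pockets have just been narrowed by the angles $\angle u$ and $\angle v$ of the new triangle. This is genuinely tight: a vertex of degree $d$ lies in $d-1$ triangles whose angles there must share the full turn $2\pi$, so for large $d$ some of these angles are forced near $2\pi/(d-1)$, and by the law of cosines the edge opposite such an angle is forced near twice the adjacent ones --- precisely the obstruction the fans exhibit. Hence the invariant must steer the edge lengths arbitrarily close to $2$ while never reaching it; in particular the window parameter will satisfy $c=c(G)\to 2$ as the maximum degree of $G$ grows, which is exactly why the theorem asserts ``strictly less than $2$'' rather than a uniform bound below $2$.
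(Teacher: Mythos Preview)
Your proposal is a plan rather than a proof, and on both halves it diverges from the paper in ways that leave real gaps.

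\textbf{Upper bound.} You correctly reduce to maximal outerplanar graphs and correctly diagnose the obstacle: a vertex of large degree forces some incident triangle to have a tiny apex angle, pushing the opposite edge toward twice the adjacent ones. But you stop at ``design the reservoir invariant carefully'' without saying what the invariant is or why it self-perpetuates; as you yourself note, the two old boundary edges at $u$ and $v$ have their pockets narrowed by the new angles, and nothing you wrote controls that. The paper resolves this with a genuinely different device: it does \emph{not} attach triangles one at a time. Instead it partitions the dual tree into \emph{chains} --- maximal paths of triangles in which no vertex is shared by more than three (four at one special spot) consecutive triangles --- and draws each chain in one shot inside a half-strip $S(s,\mathbf d)$. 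Inside a chain the ``long'' edges (those whose endpoint labels differ by $2$) are drawn with length exactly $1$ along two nearly-parallel rays, and the ``short'' edges are squeezed to lengths in $(1/2,1)$; this works precisely because chains contain no large fans. The chains form a tree, each chain's external long edges are length-$1$ segments making angle $<\arccos(1/4)$ with $\mathbf d$, and the next chain is drawn in the empty strip beyond such a segment. So the paper never confronts a high-degree vertex head-on; it dissolves each fan across several chains, and the ratio is uniformly in $(1/2,1]$ rather than $[1,c(G)]$ with $c(G)\to 2$.

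\textbf{Lower bound.} Your law-of-cosines computation is fine, but the claim it rests on --- that in \emph{every} planar straight-line drawing of the fan $F_m$ some apex angle satisfies $\theta_i\le 2\pi/(m-1)$ --- is not established. The laminar-cone observation is correct (interleaving cones would force the opposite path edges to cross), but laminarity plus ``consecutive arcs share an endpoint'' does not by itself bound the minimum arc: one can write down nested arc families $[0,\pi]\supset[\epsilon,\pi]\supset[\epsilon,\pi-\epsilon]\supset\cdots$ in which all $m-1$ arcs have length close to $\pi$. Whether such a family is \emph{realizable} as a planar drawing of $F_m$ (with the radii $v_0v_i$ and path edges all non-crossing) is exactly the ``routine planarity bookkeeping'' you defer, and it is not routine. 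The paper sidesteps this entirely with a packing argument on a different graph: take a fan and hang a \emph{pendant} triangle off every outer path edge. In any planar drawing, at most one pendant triangle can contain another (the one that does must contain $v_0$ and hence the whole drawing), so at least $k$ of the $k+1$ pendant triangles are area-disjoint; each has area $\ge\frac12\sqrt{\delta+\delta^2}$ if all edge lengths lie in $[\frac12+\delta,1]$, and they all sit in a disk of radius $2$. Choosing $k$ large enough relative to $\delta$ gives the contradiction without ever reasoning about angles at $v_0$ or about non-outerplanar embeddings of the fan.
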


Informally, Theorem~\ref{th:main} establishes that $2$ is a tight bound for the planar edge-length ratio of outerplanar graphs.
The upper bound is proved by using a suitable decomposition of an outerplanar graph into subgraphs
called {\em strips}, then drawing the graph strip by strip.
The lower bound is proved by taking into account all possible planar embeddings of a maximal outerplanar graph whose maximum vertex degree is a function of $\epsilon$.  Theorem~\ref{th:main} naturally suggests some interesting questions that are discussed in
Section~\ref{se:openproblems}.

We shall assume familiarity with basic definitions of graph planarity and of graph drawing~\cite{DBLP:books/ph/BattistaETT99} and introduce only the terminology and notation that is strictly needed for our proofs.

Note: Some proofs have been moved to the Appendix.

\section{Proof of Theorem~1}\label{se:proof-Th.1}

It suffices to establish the result for maximal outerplanar graphs. To show that the edge-length ratio of a maximal outerplanar graph is always less than $2$,
we imagine decomposing the dual $G^*$ of $G$ into a set of disjoint paths, which we call {\em chains}.
Each chain corresponds to some sequence of pairwise-adjacent triangles of $G$.
The set of chains inherits a tree structure from $G^*$, and we use this structure to direct an  algorithm that draws each of the chains proceeding from the root of this tree down to its leaves.
We formally define a chain to be a sequence $T_s, T_{s+1}, \ldots, T_t$ of triangles of $G$ where $s \leq 0 \leq t$, and such that
(i) $T_0$ consists of an outer edge of $G$ whose vertices are labeled with $0$, along with a third vertex labeled $1$, 
(ii) for each $i : 1 \leq i \leq t$, the vertices of $T_i$ are labeled by $\{i-1, i, i+1\}$ so that
$T_i$ and $T_{i-1}$ share the edge having vertices labeled $i$ and $i-1$, and
(iii) for each $i : -1 \geq i \geq s$, the vertices of $T_i$ are labeled by $\{ i, i+1, i+2\}$ so that
$T_i$ and $T_{i+1}$ share the edge having vertices labeled $i+1$ and $i+2$.
Note that this definition prohibits {\em fans} (consecutive triangles all sharing a common vertex)
containing more than $3$ triangles, except for the vertex labeled $1$, which has four
incident triangles on the chain.

The decomposition into chains is constructed by first selecting an edge $e'$ on the outer face of some outerplanar topological embedding of $G$. The edge $e'$ is incident with a unique triangle of $G$.  Label each vertex of $e'$ with $0$, and label the third vertex of the
triangle with $1$.
There is now a unique maximal chain $C_{e'}$ in $G$ containing this labeled triangle.
The edges of $C_{e'}$ can be partitioned into two sets: $S_{e'}$ and $L_{e'}$ where
$S_{e'}$ consists of edges of $C_{e'}$ whose vertex labels differ by $1$ and
$L_{e'}$ consists of all edges of $C_{e'}$ whose vertex labels differ by $2$, along with $e'$.

%

Removing the edges of $S_{e'}$ from $G$ produces a set of $2$-connected components in 1-1 correspondence with the edges of  $L_{e'}$: Each component contains exactly one element of  $L_{e'}$ which lies on its outer face.
For each edge $e \in L_{e'}$, let $G_e$ be the component of $G -  S_{e'}$ containing $e$.
We can then recursively decompose each $G_e$ by choosing the (unique) maximal chain $C_e$ in $G_e$ containing the one triangle (if any) of $G_e$ that is incident with $e$. We call the set of chains so constructed a {\em chain decomposition of $G$}. A chain decomposition produces  a decomposition of the edges of $G$ into sets $L$ and $S$, where $L$ is the union of the edges
in each $L_e$ and $S$ is the union of all of the edges in each $S_e$.
Note that there is a single chain for each edge in $L$, and that the collection of chains produced naturally form a tree: The root of the tree is the chain $C_{e'}$ and its children are the chains $\{C_e : e \in L_{e'}\}$; the chain decomposition is entirely determined by the choice of external
edge $e'$.

The drawing algorithm proceeds by first drawing the root chain $C_{e'}$ of the chain decomposition tree of $G$ and then recursively drawing the chain decomposition trees of each $G_e : e \in L_{e'}$. The algorithm depends on a specific method for drawing a single chain. To describe it, we need a few definitions. First, given a line segment $s$ in the plane and a direction
(unit vector) $\mathbf{d}$ not parallel to $s$, denote by $S(s,\mathbf{d})$ the half-infinite strip bounded by $s$ and the two infinite rays in direction $\mathbf{d}$ that have their sources at the endpoints of $s$.  Finally, given a chain $C$, the edges of $C \cap L$ are called {\em external edges} of $C$; note that each external edge is incident to exactly one triangle of $C$.

\begin{lemma}\label{lem:uChainEmbedding}
Given a chain $C$ with $n$ vertices, an external edge $e$ of $C$, a segment $s$ of length $1$ in the plane, and a direction $\mathbf{d}$ such that the (smaller) angle between $s$ and $\mathbf{d}$ is
$\theta < \theta_0 = \arccos(1/4) \approx 75.5^\circ$, there exists a planar straight-line drawing of $C$ such that: (i) the drawing is completely contained within the strip $S(s,\mathbf{d})$; (ii) no external edge $e'$ of $C$ is parallel to $\mathbf{d}$, and the strips $S(e',\mathbf{d})$ are all empty; (iii) each external edge has length $1$ and all other edges have lengths greater than $1/2$; (iv) each external edge forms an angle less than $\theta_0$ with  $\mathbf{d}$. Moreover, such a drawing can be computed in $O(n)$-time in the real RAM model.
\end{lemma}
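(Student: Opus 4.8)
\noindent\emph{Proof plan.}
First normalise coordinates so that $\mathbf d$ points in the direction of increasing $y$ and $s=e$ has its endpoints at the two corners $(0,0)$ and $(\sin\theta,\cos\theta)$ of the strip; then $S(s,\mathbf d)$ is the part above $s$ of the vertical slab of width $\sin\theta<1$. I would record three structural facts about a chain $C$: (a) its weak dual is a path, so $C$ is a triangulated polygon (a ``snake''), whose outer boundary cycle consists of the external edges of $C$ together with two ``cap'' edges of $S$ (the free edges at the two ends of the snake whose endpoint labels differ by one), while the remaining edges of $S$ are the chords of the triangulation; (b) the chord shared by two consecutive triangles $T_j,T_{j+1}$ joins their apices, which lie on the two opposite ``sides'' of the snake, so the chords zig-zag between the sides; (c) $e$ is the only external edge whose placement is forced (it must be $s$), hence the only external edge whose upward strip is necessarily non-empty — every other external edge we place ourselves.

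The drawing I would build is an extremely thin, almost $\mathbf d$-parallel planar drawing of this polygon, produced one triangle at a time along the dual path. For $T_0$, keep $e=s$ and put its third vertex far from the midpoint of $s$ in direction $\mathbf d$, so that the two incident $S$-edges $(0_a,1),(0_b,1)$ are much longer than $1/2$ and $T_0\subseteq S(s,\mathbf d)$. Then grow the two half-chains into opposite portions of the remaining width. When a new triangle $T_{j+1}$ is attached to the current ``frontier'' edge (the $S$-edge it shares with what is already drawn), choose its new vertex so that its new external edge is a unit segment tilted only microscopically off $\mathbf d$ and placed in a fresh, previously unused thin vertical sub-slab on the appropriate side of the frontier, and its new $S$-edge has length above $1/2$. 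Getting both at once is the crux: since the external edges are nearly vertical, the two $S$-edges at the apex of a triangle are hypotenuses of thin right triangles whose ``vertical'' legs are the offsets of the apex from the two endpoints of its unit external edge, and both offsets must be kept bounded away from $0$ — in fact near $1/2$ — so that both $S$-edges exceed $1/2$, while the new triangle still fits inside $S(s,\mathbf d)$ and below all previously drawn external edges in their sub-slabs. A convenient device is to maintain as an invariant that the vertices along each side of the snake alternate between two heights differing by slightly more than $1/2$, and that the snake ``dips'' between consecutive external edges so that each external edge is locally topmost.

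To finish I would verify the four conclusions and the running time. (i) holds since every vertex has $x$-coordinate in $[0,\sin\theta]$ and lies above $s$. (iii): external edges are unit by construction; chords and caps exceed $1/2$ by the height alternation. (iv) and the non-parallelism in (ii): apart from $e$, which makes angle $\theta<\theta_0$, every external edge is within a tiny angle of $\mathbf d$ and hence far below $\theta_0$, and a microscopic tilt keeps all of them strictly off $\mathbf d$. The emptiness of the strips $S(e',\mathbf d)$ for $e'\neq e$ is the most delicate conclusion: each such $e'$ is topmost in its own thin vertical sub-slab, and the only features that could intrude above $e'$ there are chords, but inside any external edge's thin sub-slab a chord's slope is far gentler than the near-vertical slope of the edge, so no chord rises above an external edge within its sub-slab — this is exactly where near-verticality, the left/right separation of the two sides, and the ``dipping'' invariant are used. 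Planarity follows since each triangle is attached in unused territory on the correct side of the frontier. Finally, each vertex is placed by an $O(1)$ coordinate computation, so the whole drawing is produced in $O(n)$ time on the real RAM.

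The hard part is this simultaneous balancing act: the emptiness requirement confines all $\Theta(n)$ external edges other than $e$ to pairwise disjoint sub-slabs of a strip of width only $\sin\theta<1$, which forces them almost vertical, which in turn tends to squeeze the triangles' apices so close to their external edges that the $S$-edges drop to $1/2$; ``vertical room'' must therefore be allocated very carefully while still respecting containment in the half-infinite strip and the forced base edge $e$, which spans the full width and sits across the two corners of the strip. The constant $\theta_0=\arccos(1/4)$ enters where one checks that the triangle $T_0$ erected on the given base admits an apex placement leaving enough room to launch both half-chains with $S$-edges longer than $1/2$, and — just as importantly for the overall algorithm — that the external edges the construction outputs all make angle below $\theta_0$ with $\mathbf d$, which is precisely what lets Lemma~\ref{lem:uChainEmbedding} be reapplied, with the same constant, when the drawing algorithm recurses on the sub-chains hanging from those edges.
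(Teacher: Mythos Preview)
Your outline captures the right overall picture — draw the chain as a thin ``snake'' in direction $\mathbf d$, with unit external edges nearly parallel to $\mathbf d$, short edges $>\tfrac12$, and the only genuinely delicate placement at $T_0$ — and that is indeed the paper's strategy. But the concrete construction you sketch is both more complicated than the paper's and, as written, not clearly correct. The paper does something much simpler: after placing $v_1$, it sets $v_{i+1}=v_{i-1}+\mathbf d$ for every $i\ge 1$ (and symmetrically $v_i=v_{i+2}+\mathbf d$ on the negative side), so every external edge lies exactly in direction $\mathbf d$ on one of at most three rays through $v_0^+,v_1,v_0^-$, and every triangle $T_i$ with $i\ne 0$ is \emph{congruent} to $T_1$. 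Then the rays are rotated a tiny amount toward one another; this single perturbation simultaneously makes the external edges non-parallel to $\mathbf d$, empties all the strips $S(e',\mathbf d)$, and keeps all short edges in $(\tfrac12,1)$. Because the later triangles are congruent, there is no per-triangle ``balancing act'' at all: once $v_1$ is placed (which is exactly where $\theta<\theta_0=\arccos\tfrac14$ is used, to guarantee a spot outside the three radius-$\tfrac12$ disks around $v_0^-,v_0^+,v_2$), everything else is automatic.

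Your proposed invariants are the weak point. Consecutive external edges on one side of the snake share a vertex (e.g.\ $\{v_0^+,v_2\}$ and $\{v_2,v_4\}$), so there is nowhere for the snake to ``dip'' between them; and if these edges are unit segments ``microscopically off $\mathbf d$'', then consecutive same-side vertices differ in height by essentially $1$, not ``slightly more than $\tfrac12$''. What you presumably want is that the two \emph{sides} are offset from each other by roughly $\tfrac12$ in the $\mathbf d$-direction — which is exactly what the paper's ray construction delivers — but your stated invariant does not say this, and the ``locally topmost'' picture is inconsistent with the monotone-in-$\mathbf d$ structure that the shared endpoints force. I would replace the incremental, sub-slab-allocating description with the paper's two-step ``place on rays, then rotate''; it makes properties (i)--(iv) almost immediate and removes the balancing difficulty you flag at the end.
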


\begin{proof}
Let $T_0$ be the triangle of $C$ containing $e$. $T_0$ is either adjacent to zero, one, or two triangles of $C$. We handle these three cases in turn. If $T_0$ is the only triangle in $C$, then
we simply draw $T_0$ in $S(s,\mathbf{d})$ as an isosceles triangle with $e$ drawn as $s$ and with its third vertex drawn so that its two edges have length $l$, where $\frac{1}{2} < l < 1$.

Assume now that $T_0$ is adjacent to a triangle $T_1$ of $C$. Denote the vertices of $C$ as follows: $T_0 = \{v_0^-,v_0^+ v_1\}$, where $e = \{v_0^-, v_0^+\}$ and $v_0^-$ is not incident with $T_1$. The vertex of $T_1$ not in $T_0$ is denoted by $v_2$, and, subsequently, the vertex of each $T_i$ not in $T_{i-1}$ is denoted by $v_{i+1}$. We draw $T_0$ as previously, but with more careful positioning of $v_1$. To determine where to position $v_1$, we draw edge $\{v_0^+, v_2\}$ of
$T_1$ as a unit-length segment in direction $\mathbf{d}$. As long as $v_1$ is positioned within
$S(s,\mathbf{d})$ but outside of the disks of radius $\frac{1}{2}$ centered at $v_0^-, v_0^+$, and $v_2$, the edges from each of these vertices to $v_1$ will have length greater than $\frac{1}{2}$
(see top half of Figure 1).
By placing $v_1$ close to $e$, the edges  $\{ v_1, v_0^- \}$ and  $\{ v_1, v_0^+ \}$ will have lengths less than $1$.
Also, since $\angle v_2 v_0^+ v_1 < \theta_0$, edge $\{ v_1, v_2 \}$ will have length less than $1$.

\begin{figure}
\centering
\includegraphics[scale=0.6]{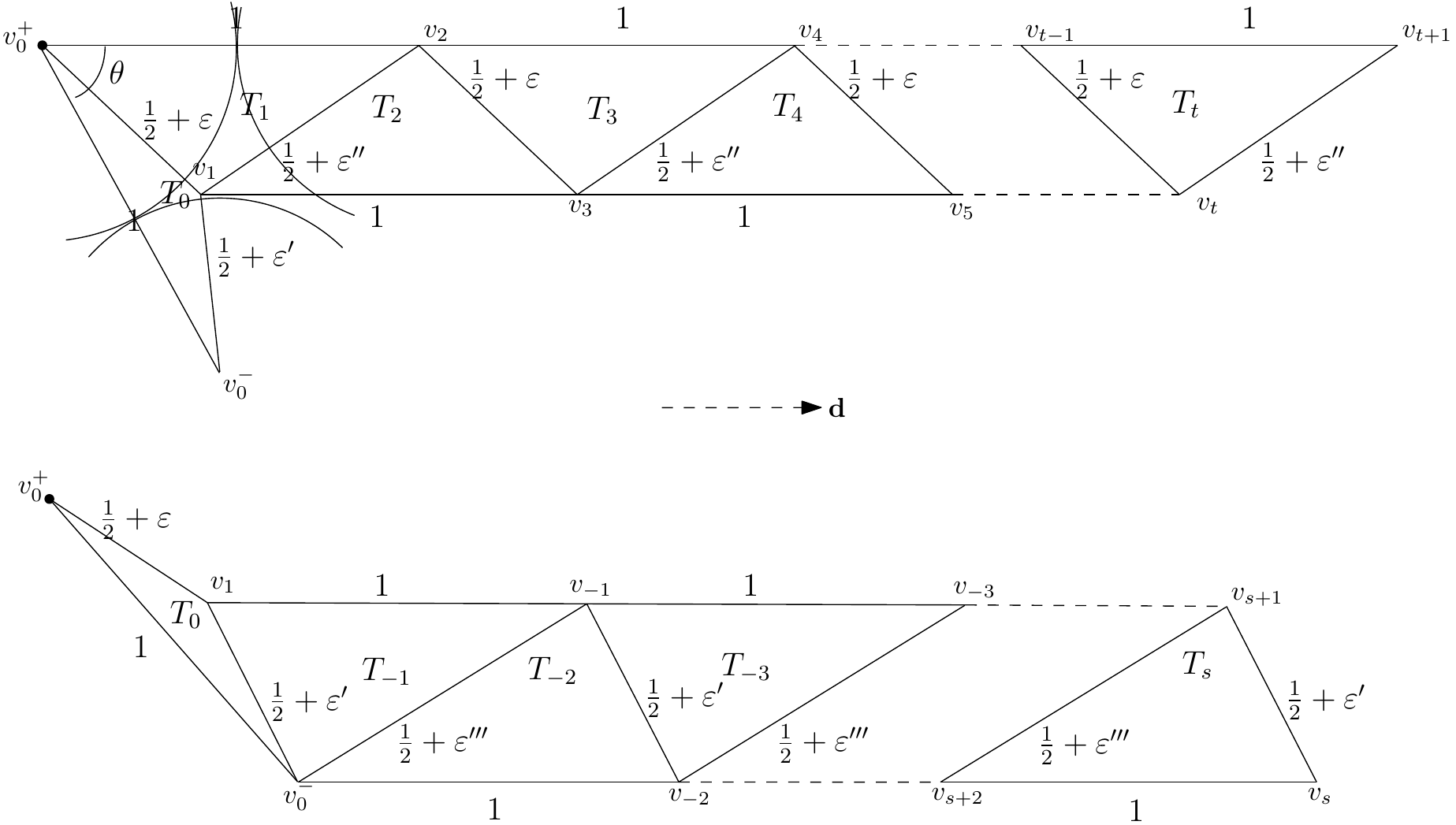}
\caption{\label{fi:EmbeddedUChain}Drawing a chain}
\end{figure}

Assuming that $T_0, \ldots, T_{i-1}$ have been drawn for some $i > 1$, $T_i$ is drawn by positioning $v_{i+1}$  one unit distant  from $v_{i-1}$ in direction $\mathbf{d}$. The result is that
each $T_i$ is congruent to $T_1$ and so the edge-length ratio of $C$ is less than $2$.
At this point, all of the unit-length segments, except for $e$, lie on the two rays in direction
$\mathbf{d}$ emanating from $v_0^+$ and $v_1$. By rotating these rays a very small amount towards one another, we can preserve the lengths of the unit-length segments while ensuring that
all of the remaining segments have lengths in the range $(\frac{1}{2}, 1)$.
See the top of Figure~\ref{fi:EmbeddedUChain}.

Finally, suppose that $T_0$ has two adjacent triangles. Starting with $T_0$, label the other triangles in $C$ so that the labels of adjacent triangles differ by $1$. Thus, for example, $T_0$ is adjacent to $T_1$ and $T_{-1}$. The vertices in the $T_i, i \geq 0$ will be labeled as in the previous case: the unique vertex in $T_i $ not in $T_{i-1}$ is labeled $v_{i+1}$.
The vertices in the $T_i, i < 0$ will be similarly labeled: the unique vertex in $T_i $ not in $T_{i+1}$ is labeled $v_i$.

Draw each $T_i, i > 0$ as in the previous case. Now draw the $T_i, i < 0$ in a similar fashion: Place vertex $v_i$  one unit distant from $v_{i+2}$ in direction $\mathbf{d}$. Then, as above,
all of the unit length edges of the triangles $T_i, i < 0$ will lie on the two rays in direction
$\mathbf{d}$ emanating from $v_1$ and $v_0^-$, and these two rays can be rotated slightly towards each other while maintaining the length of the unit-length edges and ensuring that the other edges still have lengths in the range $(\frac{1}{2}, 1)$.
See the bottom of Figure~\ref{fi:EmbeddedUChain}.
This can clearly be done so that all external edges form angles less than $\theta_0$ with $\mathbf{d}$.

However, we need to ensure that $v_1$ can be placed so that both triangle $T_1$ and triangle $T_{-1}$ can simultaneously satisfy the required edge-length conditions:
Namely, that edges
$\{v_0^-, v_0^+ \}$, $\{v_1, v_{-1} \}$, and $\{v_0^+, v_2 \}$ are all unit-length, while edges
$\{v_1, v_2 \}$, $\{v_0^+, v_1 \}$, $\{v_1, v_0^- \}$, and $\{v_0^-, v_{-1} \}$ all have lengths in the range $(\frac{1}{2}, 1)$.
However, it is relatively simple to show that $v_1$ can always be successfully placed if the (smaller) angle between $s$ and $\mathbf{d}$ is less than
$\theta_0 = \arccos(1/4)$. [The angle $\theta_0$ is the angle opposite an edge of length $2$ in an isosceles triangle having side lengths $2$, $2$, and $1$.]
Finally, the computation of the locations of the vertices can each be computed in constant time in the real RAM model, giving a run-time linear in the size of the chain.\qed
\end{proof}

We are now ready to prove the following lemma. For a planar straight-line drawing $\Gamma$, we denote with $\rho(\Gamma)$ the ratio of a longest to a shortest edge in $\Gamma$.

\begin{lemma}\label{le:upper-bound}
  A maximal outerplanar graph $G$ with $n$ vertices admits a planar straight-line drawing $\Gamma$ with $\rho(\Gamma) < 2$ that can be computed in $O(n)$ time assuming the real RAM model of computation .
\end{lemma}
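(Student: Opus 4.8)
The plan is to combine the chain decomposition introduced above with Lemma~\ref{lem:uChainEmbedding} in a single top-down recursion over the tree of chains. First I would fix, once and for all, a direction $\mathbf{d}$ and a unit-length segment $s_0$ whose (smaller) angle with $\mathbf{d}$ is less than $\theta_0$ (say $60^\circ$), and compute a chain decomposition of $G$ rooted at an arbitrarily chosen external edge $e'$; this yields a tree $\mathcal{T}$ whose nodes are chains, together with the partition of $E(G)$ into the external-edge set $L$ and its complement $S$, all in $O(n)$ time. The recursion $\textsc{Draw}(C,e,s)$ takes a node $C$ of $\mathcal{T}$, its input external edge $e$, and a unit segment $s$ that makes angle $<\theta_0$ with $\mathbf{d}$ and whose strip $S(s,\mathbf{d})$ is disjoint from everything drawn so far except possibly $s$ itself; it invokes Lemma~\ref{lem:uChainEmbedding} on $(C,e,s,\mathbf{d})$ to place $C$ inside $S(s,\mathbf{d})$, and then, for every external edge $f\neq e$ of $C$ that carries a child chain $C_f$ in $\mathcal{T}$, it calls $\textsc{Draw}$ recursively on $C_f$, on $f$, and on the segment onto which $f$ was just drawn. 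The whole drawing is obtained from $\textsc{Draw}(C_{e'},e',s_0)$.

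The four properties guaranteed by Lemma~\ref{lem:uChainEmbedding} are precisely what keep the recursion legal, and marshalling them together is the bulk of the argument. Property (iv) guarantees that $f$ is drawn making angle $<\theta_0$ with $\mathbf{d}$, and property (ii) that it is not drawn parallel to $\mathbf{d}$, so the precondition of the recursive call on $C_f$ holds and the strip over $f$ is a genuine, non-degenerate strip; property (iii) guarantees that $f$ is drawn with length $1$, so using its drawing as the new unit base segment is consistent. Planarity I would establish by induction on $\mathcal{T}$: a single chain is drawn without crossings by Lemma~\ref{lem:uChainEmbedding}; the half-infinite strip $S(s,\mathbf{d})$ is $\mathbf{d}$-convex (it is the union of the rays in direction $\mathbf{d}$ issuing from the points of $s$), so, since $f$'s drawing lies in $S(s,\mathbf{d})$, the entire recursively built drawing of $G_f$ stays inside $S(s,\mathbf{d})$ and hence clear of anything drawn outside it; and by property (ii)---which, using the explicit two-ray structure of the drawing of Lemma~\ref{lem:uChainEmbedding} together with a sufficiently small rotation of those rays, can be taken to say that the strips over distinct external edges of $C$ are pairwise interior-disjoint---the subgraphs hung on distinct external edges of a common chain occupy interior-disjoint strips and cannot meet. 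The only geometry shared by a parent chain and a child subgraph is the edge $f$, drawn identically in both, so no crossing is created, and the invariant handed to each recursive call follows by combining these facts.

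The bound on the edge-length ratio is then immediate: every edge of $G$ lies either in $L$, in which case it is an external edge of some chain and is drawn with length exactly $1$ by property (iii), or in $S$, in which case Lemma~\ref{lem:uChainEmbedding} draws it with length in $(\frac12,1)$. Since $G$ has finitely many edges, the shortest edge has length strictly greater than $\frac12$, whence $\rho(\Gamma)<2$. For the running time, the chain decomposition costs $O(n)$; each call $\textsc{Draw}(C,\cdot,\cdot)$ costs $O(|C|)$ by Lemma~\ref{lem:uChainEmbedding} plus $O(1)$ per child chain; since the chains partition the triangles of $G$ and there is one chain per edge of $L$, these costs sum to $O(n)$, so the total running time is $O(n)$ in the real RAM model.

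I expect the main obstacle to be the planarity bookkeeping rather than any single calculation: stating the right invariant (the strip passed to a recursive call is free of everything drawn so far), verifying that the strips over the external edges of one chain can be made pairwise interior-disjoint---which is the content that Lemma~\ref{lem:uChainEmbedding}(ii) must be understood to provide, using the ray structure of that construction---and confirming via $\mathbf{d}$-convexity that a child drawing never escapes its allotted strip. Once these are pinned down, the edge-length bound and the linear time bound are routine.
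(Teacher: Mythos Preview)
Your proposal is correct and follows essentially the same approach as the paper: compute the chain decomposition tree, draw the root chain with Lemma~\ref{lem:uChainEmbedding} inside a strip in a fixed direction $\mathbf{d}$, and recurse on each child chain inside the empty strip over its external edge, so that $L$-edges have length $1$ and $S$-edges have length in $(\tfrac12,1)$. Your treatment is in fact more explicit than the paper's own proof about the planarity invariant and the pairwise disjointness of the external-edge strips, which the paper leaves implicit in the phrase ``empty strip.''
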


\begin{proof}
 We call the drawing computed as in Lemma~\ref{lem:uChainEmbedding} a {\em U-strip drawing} of $C$ and adopt the same notation as in Lemma~\ref{lem:uChainEmbedding}.  Recall that in a chain decomposition of a graph, the external edges of the chains are exactly the edges of $L$. A drawing of $G$ is computed as follows.
\begin{enumerate}
\item Compute a chain decomposition tree for $G$; let $C_{e'}$ be the root of the tree.
\item Select a line segment $s$ of length $1$ in the plane and an initial direction $\mathbf{d}$ not parallel to $s$ such that $\angle s \mathbf{d} < \theta_0$.
\item Apply Lemma~\ref{lem:uChainEmbedding} to compute a U-strip drawing of $C_{e'}$.
\item Each edge $e \in L_{e'}$ is drawn as a segment $s_e$ of length $1$, not
parallel to $\mathbf{d}$, that forms an angle with $\mathbf{d}$ that is less than $\theta_0$, so draw the subtree of $C_{e'}$ rooted at $C_e$ in the empty strip $S(s_e, \mathbf{d})$.
\end{enumerate}

The result is an outerplanar straight-line drawing in which all edges of $L$ ({\em long edges}) have length $1$ while all edges in $S$ ({\em short edges}) have length strictly greater than $\frac{1}{2}$.
If we assume that the input is provided to the algorithm in the form of a doubly-connected
edge list~\cite{MULLER1978217}, then a chain decomposition tree for $G$ can be computed in
linear time. Also, since by Lemma~\ref{lem:uChainEmbedding} each chain can be drawn in time proportional to its length, the algorithm
runs in $O(n)$ time in the real RAM model.
\qed
\end{proof}

The following lemma can be proved by means of a packing argument and elementary geometry
(see Appendix A1 for details).


\begin{lemma}\label{le:lower-bound}
  For any $\epsilon > 0$ there exists a maximal outerplanar graph whose planar edge length ration is greater than $ 2 - \epsilon$.
\end{lemma}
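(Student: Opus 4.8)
The plan is to exhibit, for every $\epsilon>0$, a single maximal outerplanar graph and to show that \emph{all} of its planar straight-line drawings have edge-length ratio above $2-\epsilon$. The candidate is the \emph{fan} $F_k$ consisting of a path $u_0u_1\cdots u_k$ together with an apex vertex $v$ adjacent to all of $u_0,\dots,u_k$. This $F_k$ is a triangulated $(k+2)$-gon, hence maximal outerplanar, and it has a vertex ($v$) of degree $k+1$; indeed every maximal outerplanar graph with a vertex of degree $k+1$ contains such a fan as the "wheel of triangles" around that vertex, so $F_k$ is the natural local obstruction. We will choose $k=k(\epsilon)$ at the end.

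Fix a planar straight-line drawing $\Gamma$ of $F_k$, rescale so that a shortest edge has length $1$, and suppose for contradiction that $\rho(\Gamma)\le 2-\epsilon$, i.e.\ every edge of $\Gamma$ has length in $[1,2-\epsilon]$. For $i=0,\dots,k-1$ let $T_i$ be the triangle $v u_i u_{i+1}$ and let $\theta_i=\angle u_i v u_{i+1}$. The first ingredient is purely metric: in $T_i$ the two edges incident to $v$ have length at most $2-\epsilon$ while the opposite edge $u_iu_{i+1}$ has length at least $1$, so the law of cosines gives $\cos\theta_i\le\max\{\,(2-\epsilon)/2,\ 1-\tfrac{1}{2(2-\epsilon)^2}\,\}<1$. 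Hence there is a constant $\psi(\epsilon)>0$, depending only on $\epsilon$ (and of order $\sqrt{\epsilon}$ as $\epsilon\to 0$), with $\theta_i\ge\psi(\epsilon)$ for all $i$, so that $\sum_{i=0}^{k-1}\theta_i\ge k\,\psi(\epsilon)$.

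The second ingredient is a packing/planarity bound: since $\Gamma$ is crossing-free and, by our normalisation, all of $u_0,\dots,u_k$ lie in the disk of radius $2-\epsilon$ about $v$, the angles $\theta_i$ cannot accumulate without bound; concretely $\sum_{i=0}^{k-1}\theta_i<\Theta$ for an absolute constant $\Theta$ (one may take $\Theta=4\pi$). Intuitively $\sum_i\theta_i$ measures how much the path $u_0\cdots u_k$ "winds" around $v$ as seen along the spokes $vu_i$, and a crossing-free drawing confined to a bounded annulus cannot let that path loop around $v$ more than a bounded number of times: the spokes of an outer loop would have to cross the spokes or the path edges of an inner one. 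Combining the two ingredients, $k\,\psi(\epsilon)\le\sum_i\theta_i<\Theta$, so $k<\Theta/\psi(\epsilon)$. Therefore, taking $k=k(\epsilon):=\lceil \Theta/\psi(\epsilon)\rceil$ contradicts the hypothesis $\rho(\Gamma)\le 2-\epsilon$, which proves that the planar edge-length ratio of $F_{k(\epsilon)}$ exceeds $2-\epsilon$.

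The delicate step is the packing bound, and this is exactly where one must account for all planar embeddings of $F_k$: besides the usual outerplanar drawing, $F_k$ admits many drawings obtained by flipping triangles across the $2$-separators $\{v,u_i\}$ ("folded" drawings) and by partially spiralling the path around $v$. One must show that none of these hosts more than $\Theta$ total angle at $v$ once the radius constraint $|vu_i|\le 2-\epsilon$ is imposed: for spiral-type drawings that constraint, together with $|u_iu_{i+1}|\ge 1$ (which forces consecutive radii to differ by nearly $1$ whenever the angular gap is small), rules out a second full turn; for folded drawings, nested triangles cannot keep reusing the same angular sector without creating a crossing. Turning this into a rigorous statement — ideally a short case analysis over the planar embeddings, or an area-based packing estimate inside the disk of radius $2-\epsilon$ — is the main obstacle; everything else reduces to elementary trigonometry.
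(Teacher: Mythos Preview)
Your proposal is not a complete proof: you yourself flag the bound $\sum_i\theta_i<\Theta$ over \emph{all} planar embeddings of the fan as ``the main obstacle'' and then do not establish it. This is a genuine gap, not a routine detail. In folded embeddings of $F_k$ the triangles $T_i=v u_i u_{i+1}$ can nest, so neither the angular sectors at $v$ nor the triangle interiors are pairwise disjoint, and the hand-waving about spirals and radii (``consecutive radii differ by nearly $1$ whenever the angular gap is small'') does not obviously prevent several levels of telescoping within the annulus $1\le |vu_i|\le 2-\epsilon$. Without a rigorous disjointness or winding bound, the inequality $k\,\psi(\epsilon)<\Theta$ is unsupported.

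The paper sidesteps exactly this difficulty. Instead of arguing about the fan triangles, it attaches a \emph{pendant} triangle to every path edge $u_iu_{i+1}$ (a new degree-$2$ vertex adjacent to $u_i$ and $u_{i+1}$), and then runs an \emph{area} packing argument: if all edge lengths lie in $[\tfrac12+\delta,1]$, every triangle has area at least $\tfrac12\sqrt{\delta+\delta^2}$ by Heron, the whole drawing lies in a disk of radius $2$ about $v$ (graph-distance $\le 2$ from $v$), and crucially the pendant triangles are pairwise interior-disjoint in \emph{any} planar drawing, because a pendant triangle that contains another must contain the entire graph (so at most one pendant triangle is ``bad''). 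This last observation is the key idea your argument is missing: it manufactures disjointness across all embeddings for free, turning the packing step into a one-line area comparison. If you want to salvage your approach, either prove a uniform bound on $\sum_i\theta_i$ for all planar drawings of $F_k$ under the length constraints (non-trivial), or adopt the pendant-triangle trick and switch from angles to areas.
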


We conclude the section by observing that Lemmas~\ref{le:upper-bound} and \ref{le:lower-bound} imply Theorem~\ref{th:main}.

\section{Additional Remarks and Open Problems}\label{se:openproblems}

The upper and the lower bound of Theorem~\ref{th:main} suggest some questions that we find worth investigating. One question is whether better bounds on the planar edge-length ratio can be established for subfamilies of outerplanar graphs (for example, it is easy to show that trees have
unit-length drawings). A second question is whether an edge length variance bounded by a constant can be guaranteed for drawings of outerplanar graphs where not all vertices lie in a common face. By a variant of the approach used to prove Lemma~\ref{le:upper-bound} and by using some simple geometric observations, the following results can be proved (see Appendix~A1 and A2 for details).

\begin{theorem}\label{th:bipartite}
  The planar edge-length ratio of a bipartite outerplanar graph is $1$.
\end{theorem}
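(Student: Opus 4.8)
Since every planar straight-line drawing has edge-length ratio at least $1$, it suffices to show that every bipartite outerplanar graph $G$ admits a planar straight-line drawing in which all edges have equal length; call such a drawing a \emph{unit drawing}. The plan is to follow the chain-decomposition strategy of Lemma~\ref{le:upper-bound}, taking advantage of the absence of odd cycles so as to replace triangles by quadrilaterals, and thereby reach edge-length ratio exactly $1$ rather than merely below $2$.

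First I would reduce to a convenient class. Deleting an edge from a unit drawing leaves a unit drawing, so it is enough to treat graphs obtained from $G$ by adding edges. Fix an outerplanar embedding of $G$. Inside any bounded face of length $2\ell\ge 6$ one may insert a chord joining two face-vertices at distance $3$ along the face; this chord joins vertices of opposite colour, so the graph stays bipartite, it stays outerplanar, and the face is split into a quadrilateral and a face of length $2\ell-2$. Iterating, $G$ is a spanning subgraph of a bipartite outerplanar graph each of whose blocks is either a single edge or a \emph{quadrangulation}: a $2$-connected outerplanar graph whose outer boundary is a cycle and all of whose bounded faces are $4$-cycles. Trees have obvious unit drawings, so, modulo a gluing step over the block-cut tree discussed below, it remains to give a unit drawing of a quadrangulation.

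For a quadrangulation the weak dual is a tree $\mathcal T$, since every inner edge is a separating chord. I would process $\mathcal T$ as chains are processed in Lemma~\ref{le:upper-bound}: pick a root bounded face $F_0$ incident with a chosen outer edge $e_0$, draw $F_0$, and for each of the three edges of $F_0$ other than $e_0$, recursively draw the subgraph hanging from that edge inside a prescribed thin region (a half-infinite strip based at the drawn edge, as in Lemma~\ref{lem:uChainEmbedding}), choosing the three regions pairwise disjoint and disjoint from $F_0$; the recursion bottoms out at single edges and single $4$-faces. The one new geometric ingredient is that a bounded $4$-face can be realised as a \emph{rhombus} of unit side length, which carries a single free angular parameter. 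This one degree of freedom per face plays exactly the role played by the admissible length interval $(\frac{1}{2},1)$ for non-external edges of triangles in Lemma~\ref{lem:uChainEmbedding}: it keeps every edge of length exactly $1$ while letting us aim the three outgoing regions of a face into pairwise disjoint empty cones. Since each face is placed in constant time, the construction runs in $O(n)$ time in the real RAM model.

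The principal obstacle, just as in Lemma~\ref{lem:uChainEmbedding}, is the geometric bookkeeping. One must isolate an invariant of the form ``the subgraph attached at an external edge $e$, once $e$ is drawn as a unit segment with a prescribed admissible direction, fits inside the associated region and leaves the regions of its own external edges empty,'' and then verify that it can be maintained using only the rigid freedom available (one angle per rhombus), which is tighter than in the triangulated case; concretely, one has to check that a whole chain of unit rhombi can be forced into a single half-infinite strip while keeping the required sub-strips empty, and that at a branching face the three outgoing directions can be separated. The second delicate point is the block-cut tree: after arranging that each quadrangulated block has a unit drawing confined to an arbitrarily narrow wedge at any prescribed boundary vertex (which follows from the strip construction applied with $e_0$ incident with that vertex), blocks sharing a cut vertex $v$ are assembled by placing their wedges in pairwise disjoint angular sectors around $v$.
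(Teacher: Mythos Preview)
Your reduction to maximal bipartite outerplanar graphs and the idea of realising each inner face as a unit rhombus are exactly what the paper does. Where you diverge is in the recursive container: you try to reuse the parallel half-strips $S(s,\mathbf{d})$ of Lemma~\ref{lem:uChainEmbedding}, together with the chain machinery, whereas the paper drops chains entirely and runs a straight induction on the number of inner faces using \emph{wedges} $W(s,\mathbf{d_1},\mathbf{d_2})$ bounded by two rays that are allowed to diverge. The invariant is simply that the wedge angle $\alpha_W=\angle\mathbf{d_1}s+\angle s\,\mathbf{d_2}$ exceeds~$\pi$; one then draws the unique quadrilateral on the base edge as a unit rhombus inside the wedge and checks that two intermediate directions $\mathbf{d_2},\mathbf{d_3}$ can be chosen so that each of the three sub-wedges on the remaining rhombus edges again has angle sum $>\pi$ (the total available angle is $\alpha_W+2\pi>3\pi$, so this is immediate). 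No chain decomposition, and no separate block--cut-tree argument, is needed.

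Your version is not merely longer; the insistence on parallel strips is where it actually sticks. In the triangulated case the slack came from letting the short edges have any length in $(\tfrac12,1)$, so a whole chain could be squeezed into a strip of fixed direction. Here every edge is forced to have length~$1$, so at a branching face the three unit outgoing edges must each receive a disjoint region with a unit base, all inside a parallel strip whose width is at most~$1$: if you keep a common direction~$\mathbf{d}$, two of the three outgoing bases are forced (nearly) parallel to~$\mathbf{d}$ and their strips degenerate, while tilting the rhombus makes the side sub-strips leave the parent strip or overlap the middle one. You already sense this (``the three outgoing directions can be separated''), but allowing the directions to differ is precisely abandoning strips for wedges; once you do that, the paper's one-line angle-budget invariant replaces all the bookkeeping you anticipate.
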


The \emph{plane edge-length ratio} of a planar embedding $\mathcal{G}$ of a graph $G$ is the minimum edge-length ratio taken over all embedding-preserving planar straight-line drawings of $\mathcal{G}$.

\begin{theorem}\label{th:fixed-embedding}
For any given $k \geq 1$, there exists an embedded outerplanar graph whose plane edge-length ratio is at least $k$.
\end{theorem}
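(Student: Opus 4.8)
The plan is to produce, for each $k\ge 1$, a \emph{non-$2$-connected} outerplanar graph $G_m$ together with one of its outerplanar embeddings $\mathcal{G}_m$ such that every embedding-preserving planar straight-line drawing of $\mathcal{G}_m$ has edge-length ratio larger than $k$, provided $m$ is chosen large in terms of $k$. Non-$2$-connectivity is essential: by Theorem~\ref{th:main} every $2$-connected (indeed every maximal) outerplanar graph has planar edge-length ratio below $2$, and a $2$-connected outerplanar graph has an essentially unique outerplanar embedding, so the extra freedom the adversary-drawer is deprived of must come precisely from the cyclic order in which blocks are attached around cut vertices. Concretely, $G_m$ will be a ``spiral of blocks'': a chain of small blocks $B_1,B_2,\dots,B_m$ where $B_i$ and $B_{i+1}$ meet at a cut vertex $c_i$ at which we also attach a fixed ``spacer'' block $Q_i$; the embedding $\mathcal{G}_m$ prescribes at every $c_i$ the cyclic order $(B_i,Q_i,B_{i+1})$, which forces $B_{i+1}$ to lie on the far side of $Q_i$ from $B_i$ in every drawing. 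With the spacers chosen appropriately, these local constraints compose so that the blocks must wind monotonically around a common region.

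The heart of the argument is a geometric shrinking lemma: in any embedding-preserving straight-line drawing $\Gamma$ of $\mathcal{G}_m$ there are constants $\lambda=\lambda(\rho(\Gamma))<1$ and $c=c(\rho(\Gamma))$ with $\mathrm{diam}(B_{i+c})\le\lambda\,\mathrm{diam}(B_i)$ for all $i$. To prove it, note first that since $\rho(\Gamma)$ is bounded, no edge is much longer than the shortest, so every block is non-degenerate relative to its own diameter and contributes a definite amount of turning to the winding imposed by $\mathcal{G}_m$; hence any $c$ consecutive blocks turn together by at least a fixed angle depending only on $\rho(\Gamma)$. Now we invoke that $\mathcal{G}_m$ is an \emph{outerplanar} embedding: all vertices lie on the outer face, so the drawn chain of blocks is a simple, non-self-crossing region, and a roughly uniformly fat simple region that keeps turning in the same direction can do so only by spiralling inward, i.e.\ by a geometric decrease of scale per unit of turning (a logarithmic-spiral estimate). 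Iterating gives $\mathrm{diam}(B_m)\le\lambda^{\lfloor (m-1)/c\rfloor}\,\mathrm{diam}(B_1)$, and since the longest edge of $\Gamma$ is $\Omega(\mathrm{diam}(B_1))$ while the shortest is $O(\mathrm{diam}(B_m))$, we get $\rho(\Gamma)=\Omega\big((1/\lambda)^{\lfloor (m-1)/c\rfloor}\big)$. As $\lambda$ and $c$ depend only on $\rho(\Gamma)$, taking $m$ large enough makes this incompatible with $\rho(\Gamma)\le k$, so the plane edge-length ratio of $\mathcal{G}_m$ is at least $k$.

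The main obstacle is exactly the shrinking lemma, and within it the need to rule out the ``fan/friendship escape'': if one merely stacks many very thin triangles around a single common vertex, a drawing can keep the ratio near $2$ forever --- make the two edges at the shared vertex have lengths roughly $1$ and $2$ so the triangles never shrink --- and so winding \emph{around a single point} costs nothing. The construction must therefore be engineered so that the forced winding runs through genuine blocks separated by spacers, never simply around one vertex; quantifying ``each block turns by a definite amount as a function of $\rho(\Gamma)$'' and converting monotone turning of a simple region into a geometric decrease of scale are the two places where the ``simple geometric observations'' mentioned in the statement have to be made precise. Defining $G_m$, checking it is outerplanar, and verifying that $\mathcal{G}_m$ legitimately realizes the prescribed cyclic orders should be comparatively routine.
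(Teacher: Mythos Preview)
Your proposal rests on a misconception. You argue that non-$2$-connectivity is essential because a $2$-connected outerplanar graph has an essentially unique \emph{outerplanar} embedding and, by Theorem~\ref{th:main}, planar edge-length ratio below $2$. But the statement asks only for an \emph{embedded outerplanar graph}: the graph must be outerplanar, the fixed planar embedding need not be. The paper exploits exactly this. Its graphs $G_n$ are $2$-connected (indeed maximal outerplanar), and the embedding $\mathcal{G}_n$ is obtained by nesting triangles \emph{inside} one another, so most vertices are not on the outer face. Starting from a triangle with two distinguished edges, one repeatedly adds, inside each distinguished triangular face, a new vertex adjacent to the endpoints of the distinguished edge; an elementary perimeter lemma (if a triangle $\triangle ABC$ with shortest side $\overline{BC}$ contains area-disjoint triangles erected internally on its two shorter sides, then one of them has perimeter at least $|\overline{BC}|/2$ smaller than that of $\triangle ABC$) then shows that under any ratio bound $\rho^*$ the perimeters of nested faces must drop by at least $\frac{1}{2\rho^*}$ every one or two levels, contradicting the bound once the nesting is deep enough.

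By contrast, your spiral-of-blocks route is at best a plan: the shrinking lemma is asserted, not proved, and you yourself flag it as the main obstacle. Even the preliminary claim that local cyclic-order constraints at cut vertices compose into ``monotone turning around a common region'' is not justified; a chain of blocks with prescribed rotation at each joint can meander rather than spiral, and your ``logarithmic-spiral estimate'' is a heuristic, not a lemma. Since the intended conclusion follows from the much simpler nested-embedding argument above, you should drop the assumption that $\mathcal{G}_m$ be an outerplanar embedding and use nesting instead.
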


We conclude this paper by listing some open questions that we find interesting to study: (i) Study the edge-length ratio of triangle-free outerplanar graphs. For example, it is not hard to see that if all faces of an outerplanar graph have five vertices, a unit edge length drawing may not exist; however, the planar edge length ratio for this family of graphs could be smaller than the one established in Theorem~\ref{th:main}.
(ii) Extend the result of Theorem~\ref{th:main} to families of non-outerplanar graphs. For example it would be interesting to study whether the planar edge-length ratio of 2-trees is bounded by a constant.
(iii) Study the complexity of deciding whether an outerplanar graph admits a straight-line drawing where the ratio of the longest to the shortest edge is within a given constant. This problem is interesting also in the special case that we want all edges to be unit length.

\newpage
\bibliography{biblio}
\bibliographystyle{plain}

\newpage
\section*{Appendix}

\section*{A1: Proof of Lemma~\ref{le:lower-bound}}

\begin{proof}
  Let the length of the longest edge be $1$. We show that, for any value $\epsilon > 0$, there exists a a maximal outerplanar graph $G$ such that in any planar straight-line drawing of $G$ the length of the shortest edge must be smaller than $\frac{1}{2-\epsilon}$. Let us rewrite  $\frac{1}{2-\epsilon}$ as $\frac{1}{2} + \delta$, where $\delta=\frac{\epsilon}{2(2-\epsilon)}$. In any planar straight-line drawing of a maximal outerplanar graph such that  the longest edge has length $1$ and the shortest edge has length at least $\frac{1}{2} + \delta$, the area of every triangular face cannot become arbitrarily small, but it has a lower bound that depends on the value $\delta$. More precisely, by observing that the minimum area of a triangular face is obtained when one of its sides has length $1$ while the other two have length $\frac{1}{2} + \delta$, by Heron's formula we have that the area of any triangular face under these assumptions is at least $\frac{1}{2} \sqrt{\delta + \delta^2}$.

\begin{figure}
\centering
\includegraphics[scale=0.4]{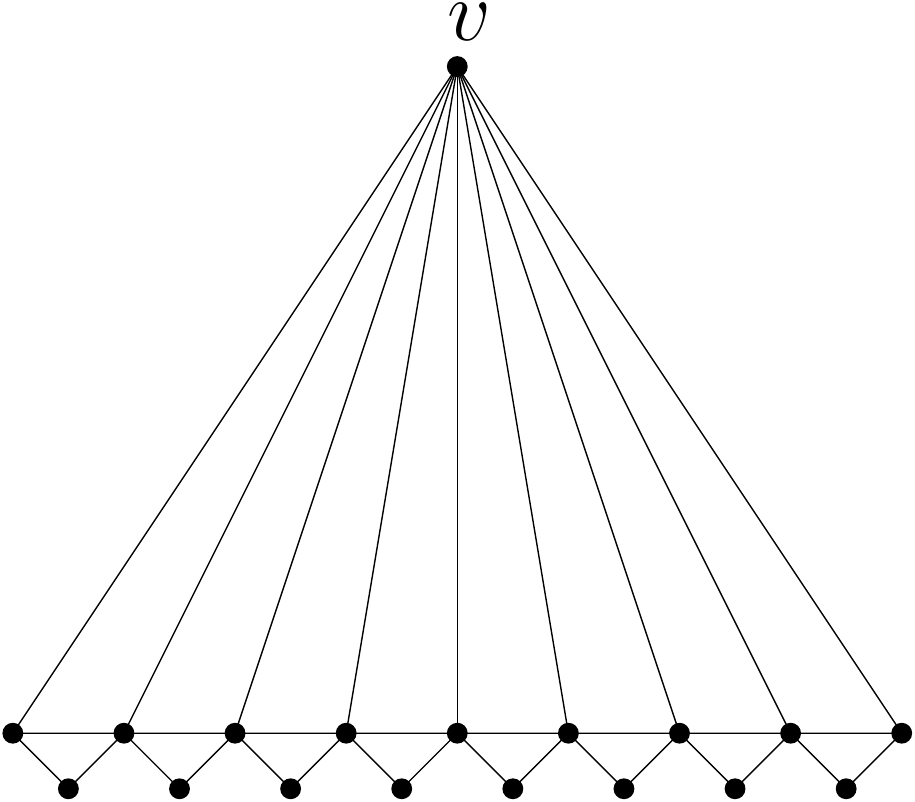}
\caption{\label{fi:lower-bound}An example of the graph in the proof of Lemma \ref{le:lower-bound} when $k =8$.}
\end{figure}

  Let $k$ be an integer such that $k > \frac{8 \pi}{\sqrt{\delta + \delta^2}}$ and consider a fan graph $F$ with a vertex $v$ of degree $k + 2$. $G$ is constructed by adding $k +1$ triangular faces to $F$ as follows: for each edge $e$ of $F$ not incident to $v$, add a new vertex $v_e$ adjacent
to both vertices of $e$. See Figure~\ref{fi:lower-bound} for an illustration when $k = 8$.
We call these non-fan triangles {\em pendant} triangles. Observe that in any planar straight-line drawing $\Gamma$ of $G$ independently of whether all vertices of $\Gamma$ appear on a common face or not, we have that the drawing has at least $k$ area-disjoint pendant triangles, since any pendant triangle that contains another triangle must contain the entire graph. Also, since the longest edge in the drawing has length $1$ and since every vertex has graph theoretic distance at most $2$ from vertex $v$, we have that $\Gamma$ lies inside a disk $D$ of radius $2$  centered at $v$. The number $\nu$ of area disjoint triangles that can be packed inside $D$ such that every triangle has area at least $\frac{1}{2} \sqrt{\delta + \delta^2}$ must have $\nu \leq \frac{8 \pi}{\sqrt{\delta + \delta^2}}$. Since $k > \nu$, it follows that any planar straight-line drawing of the maximal outerplanar graph $G$ where the length of the longest edge is $1$ requires at least one edge having length shorter than $\frac{1}{2} + \delta$. \qed
\end{proof}

\section*{A2: Proof of Theorem 2}
The proof adopts a variant of the approach used to prove Lemma~\ref{le:upper-bound}.
Given a unit-length line segment $s=\overline{p_1 p_2}$ in the plane and two directions  $\mathbf{d_1}$ and  $\mathbf{d_2}$ (unit vectors) in the same half-plane determined by $s$,
consider the region  $W(s,\mathbf{d_1}, \mathbf{d_2})$
bounded by $s$ and the two infinite rays in directions $\mathbf{d_1}$ and  $\mathbf{d_2}$
having their sources at $p_1$ and $p_2$, respectively.
Define $\alpha_W$ to be the sum of the angles $\alpha_1 = \angle \mathbf{d_1} s$ and
$\alpha_2 = \angle s \mathbf{d_2}$, where $\alpha_1$ and $\alpha_2$ are in the half-plane of $s$ containing $\mathbf{d_1}$ and $\mathbf{d_2}$.
We call $W(s,\mathbf{d_1}, \mathbf{d_2})$ a {\em wedge} and $\alpha_W$ the
{\em wedge angle}.
We will show that any bipartite outerplanar graph $G$ admits a unit-length drawing within any wedge
$W(s,\mathbf{d_1}, \mathbf{d_2})$ such that $\alpha_W > \pi$.

To simplify the construction, we first add edges to $G$ until it is a maximal (bipartite)
outerplanar graph and then show that all such graphs admit the desired type of drawing.
Every face of $G$, with the exception of its outerface, is now a quadrilateral.
We call any edge on the non-quadrilateral face an external edge of $G$.
If $G$ itself is a quadrilateral, then every edge of $G$
is external. We are now ready to prove Theorem~\ref{th:bipartite}.

\begin{proof}Let $G$ be a maximal bipartite outerplanar graph, $e = \{u_1,u_4\}$ an external edge of $G$,
$s$ a unit length segment, and $\mathbf{d_1}, \mathbf{d_4}$ two directions
in the same half-plane of $s$ such that $W = W(s,\mathbf{d_1}, \mathbf{d_4})$ has $\alpha_W > \pi$.
We construct an outerplanar straight-line drawing of $G$ within $W$ such that
$e$ is drawn as $s$ and every edge of $G$ has length $1$.

The proof proceeds by induction on the number $f$ of internal faces of $G$.
If $f = 1$, then $G$ can clearly be drawn within $W$ as a rhombus.
So assume now that the result holds for some $f \geq 1$ and that $G$ has $f+1$ internal faces.

Let $e = \{u_1, u_4\}$ and consider the unique quadrilateral face $F = \{ u_1, u_2, u_3, u_4 \}$ containing $e$.
We draw $F$ within $W$ so that $e$ is identified with $s$, and denote by $s_i$ the line segment
drawn for each edge $\{u_i, u_{i+1}\} (1 \leq i \leq 3)$.
Now select directions $\mathbf{d_2}$ and
$\mathbf{d_3}$ for rays emanating from $u_2$ and $u_3$, respectively, such that for $1 \leq i \leq 3$,
each pair $\{\mathbf{d_i}, \mathbf{d_{i+1}}\}$ is in the same half-plane determined by $s_i$, and
$W_i = W(s_i,\mathbf{d_i}, \mathbf{d_{i+1}})$
has an angle greater than $\pi$. Since $\alpha_W > \pi$, such directions exist.
Now separate $G-\{e\}$ into its (at most) three 2-connected components $G_1, G_2, G_3$, where $G_i$ has designated edge $e_i = \{u_i, u_{i+1}\}$.
By induction, each $G_i$ admits an outerplanar straight-line drawing in the corresponding wedge $W_i$ so that $e_i$ is identified with the segment $s_i$. \qed
\end{proof}

\section*{A3: proof of Theorem 3}

In this section we construct a family of embedded outerplanar graphs having unbounded plane edge-length ratios. To prove the correctness of the construction we use the following fact that can be proved by using elementary geometry.

\begin{lemma}\label{le:shrinking-perimeter}
Let $\triangle ABC$ be a triangle with longest edge $\overline{AB}$,
shortest edge $\overline{BC}$, and containing two points $X$ and $Y$
such that the triangles $\triangle ACX$ and $\triangle BCY$ are area-disjoint.
Then at least one of the triangles $\triangle ACX$ and $\triangle BCY$ has perimeter at
least $\frac{| \overline{BC} |}{2}$ shorter than the perimeter of $\triangle ABC$.
\end{lemma}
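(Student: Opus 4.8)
The plan is to reduce to the situation where both $X$ and $Y$ lie on the longest edge $\overline{AB}$, and then to bound the \emph{sum} of the two perimeter deficits. Write $a=|\overline{BC}|\le b=|\overline{CA}|\le c=|\overline{AB}|$, let $P=\mathrm{perim}(\triangle ABC)=a+b+c$, and set $\Delta_X=P-\mathrm{perim}(\triangle ACX)$ and $\Delta_Y=P-\mathrm{perim}(\triangle BCY)$; the goal is then $\max\{\Delta_X,\Delta_Y\}\ge a/2$, for which it suffices to prove $\Delta_X+\Delta_Y\ge a$. Since $A$ and $C$ are vertices of $\triangle ABC$ and $X$ lies in $\triangle ABC$, convexity gives $\triangle ACX\subseteq\triangle ABC$, and similarly $\triangle BCY\subseteq\triangle ABC$; both small triangles contain the vertex $C$ and each lies inside the angular wedge it subtends at $C$, so the area-disjointness of $\triangle ACX$ and $\triangle BCY$ is equivalent to $\angle ACX+\angle BCY\le\angle ACB$.

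\textbf{Reduction to $X,Y\in\overline{AB}$.} First I would slide $X$ outward along the ray from $C$ through $X$ until it meets $\overline{AB}$, and do the same for $Y$. Writing $t=|CX|$ and $\gamma_1=\angle ACX$, so that $|AX|^2=b^2-2tb\cos\gamma_1+t^2$, one checks that $\frac{d}{dt}(t+|AX|)=1+\frac{t-b\cos\gamma_1}{|AX|}\ge 0$ (using $|AX|\ge|t-b\cos\gamma_1|$); hence $\mathrm{perim}(\triangle ACX)$ does not decrease under this slide, so $\Delta_X$ does not increase, and symmetrically for $\Delta_Y$. The slide leaves the angles $\angle ACX$ and $\angle BCY$ unchanged, so area-disjointness is preserved. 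Thus it is enough to prove $\Delta_X+\Delta_Y\ge a$ when $X,Y\in\overline{AB}$ (the only degenerate configuration, $X=C$ or $Y=C$, is immediate, since then $\Delta_X=a+c-b\ge a$ or $\Delta_Y=b+c-a\ge a/2$).

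\textbf{Summing the deficits.} With $X,Y\in\overline{AB}$, the triangles $\triangle ACX$ and $\triangle BCY$ share the apex $C$ and have bases $\overline{AX}$ and $\overline{YB}$ on the line through $A$ and $B$, so area-disjointness forces the order $A,X,Y,B$ along $\overline{AB}$, whence $|AX|+|XY|+|YB|=c$. Consequently
\[ \Delta_X+\Delta_Y = (a+c-|CX|-|AX|)+(b+c-|CY|-|YB|) = (a+b+c)+|XY|-|CX|-|CY|. \]
Since $|XY|\ge 0$, since the distance from $C$ to any point of $\overline{AB}$ is at most $\max\{a,b\}=b$ (so $|CX|,|CY|\le b$), and since $c\ge b$ because $\overline{AB}$ is the longest edge, the right-hand side is at least $(a+b+c)-2b=a+c-b\ge a$. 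Hence $\max\{\Delta_X,\Delta_Y\}\ge a/2=|\overline{BC}|/2$, as claimed.

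\textbf{Main obstacle.} The algebra is routine; the only delicate point is the reduction step — verifying that the radial projection of $X$ and $Y$ onto $\overline{AB}$ never shortens the relevant perimeter and never creates an overlap of the two small triangles, together with the handling of the boundary configurations. Once $X$ and $Y$ are on $\overline{AB}$, the estimate is just the triangle inequality together with the ordering $A,X,Y,B$.
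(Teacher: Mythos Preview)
Your argument is correct and complete. It is, however, a genuinely different route from the paper's. The paper introduces the point $D$ on $\overline{AB}$ for which the cevian $\overline{CD}$ bisects the perimeter of $\triangle ABC$, so that $\triangle ACD$ and $\triangle BCD$ both have perimeter $P/2+d$ with $d=|\overline{CD}|\le b$; the deficit of each half is then $P/2-d\ge a/2$. The (implicit) link to arbitrary $X,Y$ is that area-disjointness forces one of $\triangle ACX$, $\triangle BCY$ to lie entirely in the corresponding half, whence its perimeter is bounded by that of the half via convex containment. Your proof avoids the bisecting cevian altogether: you first push $X$ and $Y$ radially from $C$ onto $\overline{AB}$ (which can only shrink the deficits), and then bound the \emph{sum} $\Delta_X+\Delta_Y$ directly using $|AX|+|XY|+|YB|=c$ and $|CX|,|CY|\le b$. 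The paper's approach is shorter once the containment step is supplied, and it actually shows the stronger conclusion that the deficit is at least $P/2-b\ge a/2$ for the relevant triangle; your approach is more self-contained, needs no appeal to the monotonicity of perimeter under convex containment, and makes the role of the hypothesis $c\ge b$ completely explicit in the final inequality $a+c-b\ge a$.
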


\begin{proof}
Let $P = a + b + c$ be the perimeter of $ABC$, where $a \leq b \leq c$ are the lengths of the edges opposite the vertices $A, B, C$, respectively.
Consider the segment from $C$ to the (unique) point $D$
on $\overline{AB}$ that divides $\triangle ABC$ into two triangles
$\triangle CBD$ and $\triangle CDA$ of equal perimeter $\frac{P}{2} + d$, where $d$ is the length
of segment $\overline{CD}$.

The difference in perimeter between $\triangle ABC$ and $\triangle CBD$ (or $\triangle CDA$)
is $\frac{P}{2} - d = \frac{a+b+c}{2} - d = \frac{a+b+c-2 d}{2} \geq \frac{a}{2}$
since $d \leq b \leq c$. \qed
\end{proof}

We are now ready to prove Theorem~\ref{th:fixed-embedding}
\begin{proof}
We construct a family $G_n$ of  outerplanar graphs having planar embeddings $\mathcal{G}_n$ such that $\rho(\mathcal{G}_n) \rightarrow \infty$ as $n \rightarrow \infty$, where
 $\rho(\mathcal{G}_n)$ denotes the plane edge-length ratio of the planar embedding
 $\mathcal{G}_n$ of the graph $G_n$.
Each $G_i$ will have a set of distinguished edges, and for $i > 0$, each $G_i$ will contain $G_{i-1}$ as a subgraph.
$G_0$ is a single triangle, with two distinguished edges. $G_{i+1}$ is constructed from $G_i$ by adding, for each distinguished edge $e$ of $G_i$, a new vertex $v_e$ that is adjacent to both vertices of $e$; the newly added edges incident to each $v_e$ are the distinguished edges of $G_{i+1}$.
There is only $1$ planar embedding of $G_0$; the embedding $\mathcal{G}_1$ of $G_1$
is obtained by putting the two vertices of $G_1 \setminus G_0$ in the inner face of $\mathcal{G}_0$.
For $i > 1$, the embedding $\mathcal{G}_{i+1}$ of $G_{i+1}$ is defined by placing each new vertex $v_e$ in the (unique) triangular face of $\mathcal{G}_i$ having $e$ on its boundary.

Assume that for each $\mathcal{G}_n$ there is a planar straight-line drawing $\Gamma_n$ of
$\mathcal{G}_i$ that preserves the embedding described above, and that, for some
$\rho^* \geq 1, \rho(\Gamma_n) \leq \rho^*$ for all $n \geq 0$.
We show that this assumption leads to a contradiction.
To simplify our notation slightly, we will assume that some edge of $G_0$ has length
$1$, since we can always scale the drawing to make this so;
thus the smallest edge-length in any of the $\Gamma_i$ is at least $\frac{1}{\rho^*}$.

Consider a triangular face $T$ created in the construction of $\mathcal{G}_i$ for some $i>0$.
It consists of a distinguished edge $e$ of $G_{i-1}$ along with the vertex $v_e$ in
$G_i \setminus G_{i-1}$ adjacent to $e$.
The two edges  $e_1$ and $e_2$ incident with $v_e$ are distinguished edges of $G_i$, so in
$\mathcal{G}_{i+1}$ each
of them will form a triangle with some new vertex, say $v_1$ and $v_2$, respectively, and both new vertices will be in the face $T$.

Now consider the line segment $s$ from $v_e$ to $e$
that bisects the perimeter of $T$, as in the proof of Lemma~\ref{le:shrinking-perimeter}, forming
two triangles $T_1$ and $T_2$, where $T_i$ contains $e_i$ on its boundary.
Now either $T_1$ contains $v_1$ or $T_2$ contains $v_2$; whichever $T_j$ contains $v_j$ then
contains the entire triangular face formed by $v_j$ and $e_j$.
Assume, w.l.o.g., that $T_1$ contains $v_1$. We consider two cases.
\begin{enumerate}
\item If $e$ is the longest edge of $T$, then by Lemma~\ref{le:shrinking-perimeter}, the perimeter
of $T_1$ is smaller than that of $T$ by at least half the length of the shortest side of $T$.
But no edge in any of the $\mathcal{G}_i$ is shorter than $\frac{1}{\rho^*}$, and so the perimeter of $T_1$ is
smaller than that of $T$ by at least $\frac{1}{2\rho^*}$.
\item If $e$ is {\em not} the longest edge of $T$, then one of $e_1$ or $e_2$ must be.
If $e' \not= e$ is the longest edge of $T$, then after the construction of $\mathcal{G}_{i+1}$, $e'$ must
be the longest edge of the (new) triangle $T'$ formed by $e'$ and $v_{e'}$ (because $T'$
is contained in $T$, which has $e'$ as its longest edge).
Thus, in the construction of $\mathcal{G}_{i+2}$, one of the new triangles formed, by similar
application of Lemma~\ref{le:shrinking-perimeter}, will have perimeter that is at least
$\frac{1}{2\rho^*}$ shorter than the perimeter of $T'$, which itself has perimeter shorter than
that of $T$.
\end{enumerate}

In each of the two cases above we have identified a triangle, either in $\mathcal{G}_{i+1}$ or in
$\mathcal{G}_{i+2}$,
with perimeter at least $\frac{1}{2\rho^*}$ shorter than that of $T$.
Repeating this process $k$ times results in a triangle that has perimeter at least
$\frac{k}{2\rho^*}$ shorter than that of $T$.
But the perimeter of $T$ is at most $3 \rho^*$, so eventually the quantity
$3 \rho^* - \frac{k}{2\rho^*}$ becomes negative---a contradiction. \qed
\end{proof}

\end{document}